\newtheorem{assumption}[thm]{Assumption}
\def\ExtendSymbol#1#2#3#4#5{\ext@arrow 0099{\arrowfill@#1#2#3}{#4}{#5}}
\def\RightExtendSymbol#1#2#3#4#5{\ext@arrow 0359{\arrowfill@#1#2#3}{#4}{#5}}
\def\LeftExtendSymbol#1#2#3#4#5{\ext@arrow 6095{\arrowfill@#1#2#3}{#4}{#5}}
\newcommand\myArrow[2][]{\ExtendSymbol{\leftharpoondown}{-}{\rightharpoonup}{#1}{#2}}
\newcommand\myRightarrow[2][]{\RightExtendSymbol{-}{-}{\longrightarrow}{#1}{#2}}
\newcommand{\norm}[1]{\left\lVert#1\right\rVert}
\newcommand{\set}[1]{\left\{#1\right\}}
\DeclareMathOperator{\rank}{rank} %
\DeclareMathOperator{\diag}{diag} %
\DeclareMathOperator{\E}{E} %
\begin{document}
\begin{frontmatter}
\title{Robust Decentralized Stabilization of Markovian
  Jump Large-Scale Systems: A Neighboring Mode Dependent Control Approach\thanksref{footnoteinfo}}

\thanks[footnoteinfo]{This work was supported by National Natural Science
  Foundation of China under Grant 61004044, Program for New Century Excellent
  Talents in University (11-0880), the Fundamental Research Funds for the
  Central Universities (WK2100100013), and the Australian Research
  Council. Corresponding author J.~Xiong. Tel. +0086-551-63607782.}
\author[China]{Shan Ma}\ead{shanma@mail.ustc.edu.cn},
\author[China]{Junlin Xiong}\ead{junlin.xiong@gmail.com},
\author[Australia]{Valery A. Ugrinovskii}\ead{v.ugrinovskii@gmail.com},
\author[Australia]{Ian R. Petersen}\ead{i.r.petersen@gmail.com}
\address[China]{Department of Automation, University of Science and Technology of China, Hefei 230026, China}
\address[Australia]{School of Engineering and Information Technology, University of New South Wales at the Australian Defence Force Academy, Canberra ACT 2600, Australia}

\begin{keyword}
  Large-scale systems; Linear matrix inequalities; Markovian jump systems; Stabilization.
\end{keyword}

\begin{abstract}
This paper is concerned with the decentralized stabilization problem for a class of
uncertain large-scale systems with Markovian jump parameters. The controllers
use local subsystem states and neighboring mode information to generate local
control inputs. A sufficient condition involving rank constrained linear matrix
inequalities is proposed for the design of such controllers.
A numerical example is given to illustrate the developed theory.
\end{abstract}
\end{frontmatter}

\section{Introduction}
Many physical systems, such as power systems and economic systems, often
suffer from random changes in their parameters. These parameter changes may
result from abrupt environmental disturbances, component failures or repairs,
etc. In many cases, a Markov chain provides a suitable model to describe the
system parameter changes.  A Markovian jump system is a hybrid system with
different operation modes.  Each operation mode corresponds to a deterministic
system and the jumping transition from one mode to another is governed by a
Markov chain.  Recently, Markovian jump systems have received a lot of
attention and many control issues have been studied, such as stability and
stabilization~\cite{FLS10:tac,S06:tac}, time
delay~\cite{FGS09:auto,SLXZ07:auto}, filtering~\cite{SBA991:tac,WSGW08:auto},
$H_{2}$ control~\cite{DY08:auto}, $H_{\infty}$
control~\cite{LU07:tac,XC02:tac}, model reduction~\cite{ZHL03:scl}. For more
information on Markovian jump systems, we refer the reader
to~\cite{MY06:book}.

In this paper, we consider the decentralized stabilization problem for a class of uncertain
Markovian jump large-scale systems. The aim is to design a set of appropriate local feedback control
laws, such that the resulting closed-loop large-scale system
is stable even in the presence of uncertainties.
Recently, the decentralized stabilization problem for Markovian jump large-scale systems
has been investigated in the literature; see e.g.,~\cite{UP05:ijc, LUO07:auto} and the references therein. It is important to point out that the stabilizing techniques developed in~\cite{UP05:ijc, LUO07:auto} and many other papers are built upon an implicit assumption that the mode information
of the large-scale system must be known to all of the local controllers. In other words,
the mode information of all the subsystems must be measured and then broadcast
to every local controller. Such an assumption,
however, may be unrealistic either because the broadcast of mode information
among the subsystems is impossible in practice or because the implementation is expensive.

To eliminate the need for broadcasting mode information,
a local mode dependent control approach has been developed in~\cite{XUP09:tac,XUP10:book}.
This control approach is fully decentralized. The local controllers
use only local subsystem states or outputs and local subsystem mode information
to generate local control inputs.
To emphasize this feature, this type of controller is referred to
as a \emph{local mode dependent controller} in~\cite{XUP09:tac,XUP10:book}.
As pointed out in~\cite{XUP09:tac,XUP10:book},
the local mode dependent control approach offers many advantages in practice.
First, it eliminates the need for broadcasting mode information among the subsystems
and hence is more suitable for practical applications.
Second, it significantly reduces the number of control gains
and hence results in cost reduction, easier installation and maintenance.

In this paper, we focus on the state feedback case of Markovian jump
large-scale systems and aim to build a bridge between the results
in~\cite{UP05:ijc} and~\cite{XUP09:tac}.  We assume that each local controller
is able to access and utilize mode information of its neighboring subsystems
including the subsystem it controls.  This assumption is motivated by the fact
that some subsystems may be close to each other in practice and hence exchange
of mode information may be possible among these subsystems.  Under this
assumption, we develop an approach, which we call a~\emph{neighboring mode
  dependent control approach}, to stabilize Markovian jump large-scale
systems.  Compared to the local mode dependent control approach, our approach
can stabilize a wider range of large-scale systems in practice.  It is
demonstrated in the numerical section that the system performance will improve
as more detailed mode information is available to the local controllers.
Hence the system performance achieved by our approach is better than that
achieved by the local mode dependent control approach. Furthermore, both the
global and the local mode dependent control approaches proposed
in~\cite{UP05:ijc} and~\cite{XUP09:tac} can be regarded as special cases of
the neighboring mode dependent control approach.

\textit{Notation:} $\mathbb{R}^{+}$ denotes the set of positive real numbers;
$\mathbb{S}^{+}$ denotes the set of positive definite  matrices;
$\mathbb{R}^{m}$ denotes the set of real $m \times 1$ vectors;
$\mathbb{R}^{m \times n}$ denotes the set of real $m \times n$ matrices.
$\diag[F_{1},\ldots,F_{N}]$ denotes a block diagonal matrix with $F_{1},\ldots,F_{N}$
on the main diagonal.
$I$ is the identity matrix.
For real symmetric matrices $X$ and $Y$,
$X\ge Y$ (respectively, $X>Y$) means that $X-Y$ is positive semi-definite
(respectively, positive definite).
$\norm{\cdot}$ denotes either the Euclidean norm for vectors or the
induced $2$-norm for matrices. The superscript ``$T$'' denotes transpose of a vector  or a matrix.
$\E(\cdot)$ denotes the expectation operator with respect to the underlying complete
probability space $(\Omega,\mathcal{F},\Pr)$.

\section{Problem Formulation}
\label{sec:pf}
Consider a Markovian jump large-scale system $\mathcal{S}$ comprising $N$ subsystems
$\mathcal{S}_{i}$, $i\in\mathcal{N}\triangleq\{1,2,\ldots,N\}$.
The $i$th subsystem $\mathcal{S}_{i}$ is of the following form~\cite{XUP09:tac}:
\begin{equation}\label{sys}
  \mathcal{S}_{i}:
  \left\{\begin{aligned}
    \dot{x}_{i}(t) &= A_{i}(\eta_{i}(t))x_{i}(t) + B_{i}(\eta_{i}(t))u_{i}(t) \\
    &\quad + E_{i}(\eta_{i}(t))\xi_{i}(t) + L_{i}(\eta_{i}(t))r_{i}(t), \\
    \zeta_{i}(t) &= H_{i}(\eta_{i}(t))x_{i}(t),
  \end{aligned}\right.
\end{equation}
where $x_{i}(t)\in\mathbb{R}^{n_{i}}$ is the state, $u_{i}(t)\in\mathbb{R}^{m_{i}}$ is the input,
$\xi_{i}(t)\in\mathbb{R}^{g_{i}}$ is the local uncertainty input,
$r_{i}(t)\in\mathbb{R}^{s_{i}}$ is the interconnection input, which describes
the effect of the other subsystems $\mathcal{S}_{j}$, $j\ne i$, on
$\mathcal{S}_{i}$. $\zeta_{i}(t)\in\mathbb{R}^{h_{i}}$ is the uncertainty output.
The initial state $x_{i}(0)$ is denoted by $x_{i0}$.
The random process $\eta_{i}(t)$ denotes the mode switching of the subsystem $\mathcal{S}_{i}$;
it takes values in a finite set $\mathcal{M}_{i}\triangleq\set{1,2,\ldots,M_{i}}$. The structure of $\mathcal{S}_{i}$ is shown in~Fig.~\ref{system}.

\begin{figure}[htbp]
\begin{center}
\includegraphics[width=8cm]{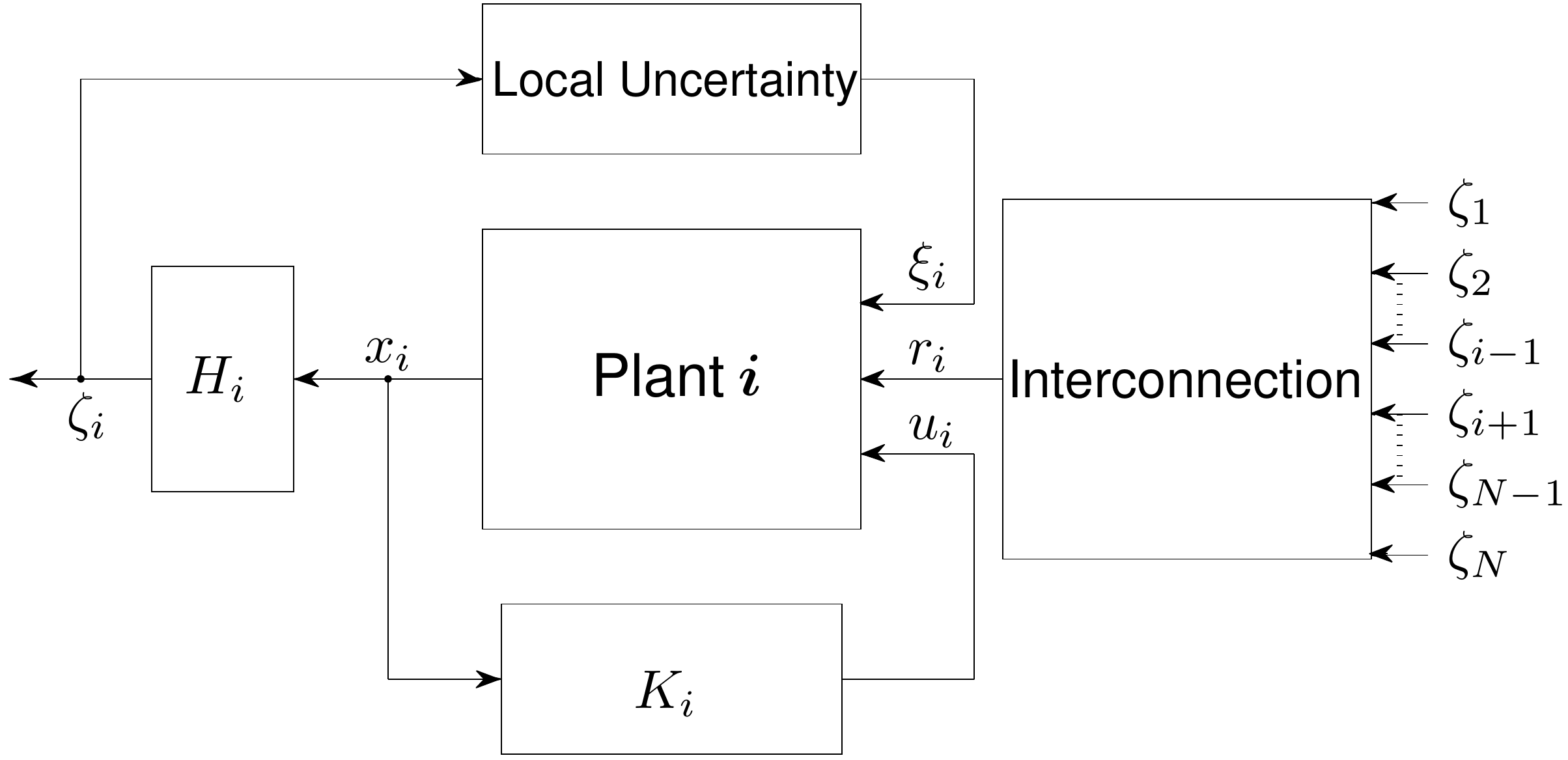}
\caption{The subsystem $\mathcal{S}_{i}$.}
\label{system}
\end{center}
\end{figure}

Because the mode process $\eta_{i}(t)$ describes the mode switching of the $i$th subsystem $\mathcal{S}_{i}$,
the vector process $[\eta_1(t),\ldots,\eta_N(t)]^{T}$ naturally describes the mode switching of
the entire large-scale system $\mathcal{S}$. We assume that $[\eta_1(t),\ldots,\eta_N(t)]^{T}$ takes values
in a set (denoted by $\mathcal{M}_{V}$) consisting of $M$ distinct vectors.
If any $\eta_{i}(t)$, $i\in\mathcal{N}$, changes its value,
the vector $[\eta_1(t),\ldots,\eta_N(t)]^{T}$ will take a different value.
Hence $M_{i}\le M$.
In addition, we have $M\le\prod_{i=1}^{N}M_{i}$ (not necessarily $``="$, because the
 mode processes $\eta_{i}(t)$, $i\in\mathcal{N}$, may depend on each
other~\cite{XUP09:tac}). Let $\mathcal{M}_{S}\triangleq\{1,2,\ldots,M\}$,
then a bijective mapping $\psi:\mathcal{M}_{V}\to\mathcal{M}_{S}$ exists, because $\mathcal{M}_{V}$ and $\mathcal{M}_{S}$ have the same number of elements.
Let $\eta(t)\triangleq\psi([\eta_1(t),\ldots,\eta_N(t)]^{T})$.
Thus the random vector process $[\eta_1(t),\ldots,\eta_N(t)]^{T}$ is transformed
into the random scalar process $\eta(t)$, which carries the same mode information
of the large-scale system $\mathcal{S}$. For this reason, $\eta(t)$ is referred to as the global mode process
in the sequel. The inverse function $\psi^{-1}:\mathcal{M}_{S}\to\mathcal{M}_{V}$
is given by $\psi^{-1}(\mu)=[\mu_{1},\ldots,\mu_{N}]^{T}$, $\mu\in\mathcal{M}_{S}$, $\mu_{i}\in\mathcal{M}_{i}$, $i\in\mathcal{N}$.
Then the $i$th element $\mu_{i}$ can be determined uniquely from the global mode $\mu$.
Therefore $\mu_{i}$ is also a function of $\mu$ and we write: $\mu_{i} = \psi_{i}^{-1}(\mu)$,
$i\in\mathcal{N}$.

We assume here
that $\eta(t)$ is a stationary Markov process.
The infinitesimal generator matrix of $\eta(t)$ is  $\mathbf{Q}=[q_{\mu\nu}]\in\mathbb{R}^{M\times M}$,
where $q_{\mu\nu}\ge0$ if $\nu\ne\mu$, and $q_{\mu\mu}=-\sum_{\nu=1,\nu\ne\mu}^{M}q_{\mu\nu}$.
The initial distribution of the process $\eta(t)$ is $\pi=[\pi_{1},\ldots,\pi_{M}]^{T}$ with $\pi_{\mu}\geq 0$, $\forall\mu\in\mathcal{M}_{S}$.

\begin{assumption}[\cite{UP05:ijc}]
 \label{locally:squire:integrable}
  Given any locally square integrable signals $u_{i}(t)$, $\xi_{i}(t)$, $r_{i}(t)$,
  for any initial conditions $x_{i}(0)=x_{i0}$, $\eta_{i}(0)=\eta_{i0}$,
  the solution $x_{i}(t)$ to each subsystem~\eqref{sys}
  exists and is locally square integrable.
\end{assumption}

\begin{rem}
  Recall that a signal $s(t)$ is said to be locally square integrable if it
  satisfies the condition $\E\left(\int_{0}^{\mathcal{T}}\norm{s(t)}^{2}dt\right)<\infty$
  for any finite time $\mathcal{T}$.
The term ``locally'' here means that square integrability is only required on bounded time intervals.
\end{rem}

The local uncertainty inputs and the interconnection inputs of the
large-scale system~\eqref{sys} are
assumed to satisfy the following integral quadratic
constraints (IQCs).

\begin{defn}[\cite{LUO07:auto}]
  \label{defn:un:local}
  Given a set of positive definite matrices $\bar{S}_{i}$, $i\in\mathcal{N}$.
  A locally square integrable signal $[\xi_{1}^{T}(t),\ldots,\xi_{N}^{T}(t)]^{T}$
  represents an
  admissible local uncertainty input for the large-scale system~\eqref{sys}
  if, given any locally square
  integrable signals $[u_{1}^{T}(t),\ldots,u_{N}^{T}(t)]^{T}$, $[r_{1}^{T}(t),\ldots,r_{N}^{T}(t)]^{T}$, there exists a time sequence $\set{t_{l}}_{l=1}^{\infty}$, $t_{l}\to\infty$,
  such that for all $l$ and for all $i\in\mathcal{N}$,
  \begin{align}
    \E\left(\int_{0}^{t_{l}}\left[\norm{\zeta_{i}(t)}^{2}
     -\norm{\xi_{i}(t)}^{2}\right] dt\right)
    \ge -x_{i0}^{T}\bar{S}_{i}x_{i0}.     \label{un:local}
  \end{align}
  The set of all such admissible local uncertainty inputs is denoted by $\Xi$.
\end{defn}

\begin{defn}[\cite{LUO07:auto}]
  \label{defn:un:ic}
  Given a set of positive definite matrices $\tilde{S}_{i}$, $i\in\mathcal{N}$.
  A locally square integrable signal $[r_{1}^{T}(t),\ldots,r_{N}^{T}(t)]^{T}$
  represents an admissible interconnection input for the large-scale system~\eqref{sys}
  if,
  given any
  locally square integrable signals $[u_{1}^{T}(t),\ldots,u_{N}^{T}(t)]^{T}$,
   $[\xi_{1}^{T}(t),\ldots,\xi_{N}^{T}(t)]^{T}$, there exists a time sequence $\set{t_{l}}_{l=1}^{\infty}$,
   $t_{l}\to\infty$, such that for all $l$ and for all $i\in\mathcal{N}$,
  \begin{align}
    &\E\left(\int_{0}^{t_{l}}\left[\sum\limits_{j=1,j\ne i}^{N}\norm{\zeta_{j}(t)}^{2}-
     \norm{r_{i}(t)}^{2}\right]
     dt\right)\ge -x_{i0}^{T}\tilde{S}_{i}x_{i0}. \label{un:ic}
  \end{align}
The set of all such admissible interconnections is denoted by $\Pi$.
We assume that the same time sequences
$\{t_{l}\}_{l=1}^{\infty}$  are chosen in Definition~\ref{defn:un:local} and Definition~\ref{defn:un:ic} whenever they correspond to the
same signals $[\xi_{1}^{T}(t),\ldots,\xi_{N}^{T}(t)]^{T}$, $[r_{1}^{T}(t),\ldots,r_{N}^{T}(t)]^{T}$, $[u_{1}^{T}(t),\ldots,u_{N}^{T}(t)]^{T}$.
\end{defn}

\begin{rem}
  The IQCs are used to describe relations between the input and output signals
  in the uncertainty blocks in~Fig.~\ref{system}. The constant terms on the
  right-hand sides of the inequalities~\eqref{un:local} and~\eqref{un:ic}
  allow for nonzero initial conditions in the uncertainty dynamics.  These
  definitions can capture a broad class of uncertainties such as nonlinear,
  time-varying, dynamic uncertainties; see~\cite[Chapter 2.3]{PUS00:book} for
  details.
\end{rem}

Let $\mathcal{C}=[c_{ij}]\in \mathbb{R}^{N\times N}$ be a given binary matrix,
where $c_{ij}=1$ if the mode of the subsystem $\mathcal{S}_{j}$ is available
to the $i$th local controller and $c_{ij}=0$, otherwise. Then the total mode
information accessed by the $i$th local controller can be written as
$[c_{i1}\eta_1(t),\ldots,c_{iN}\eta_N(t)]^{T}$. A zero entry in this vector
means that the mode information of the corresponding subsystem is not
available. We assume that the random vector process
$[c_{i1}\eta_1(t),\ldots,c_{iN}\eta_N(t)]^{T}$ takes values in a set (denoted
by $\mathcal{M}_{V i}$) consisting of $M_{ci}$ distinct vectors.
Obviously, $M_{ci}\le M$, $i\in\mathcal{N}$.  Let
$\mathcal{M}_{S i}\triangleq \{ 1,\ldots,M_{ci}\}$.  Also, there exists a
bijective mapping $\varphi_{i}:\mathcal{M}_{V i}\to\mathcal{M}_{S i}$
with $\varphi_{i}([c_{i1}\mu_1,\ldots,c_{iN}\mu_N]^{T})=\sigma_{i}$,
$\mu_{i}\in\mathcal{M}_{i}$, $\sigma_{i}\in\mathcal{M}_{S i}$,
$i\in\mathcal{N}$. Let
$\aleph_{i}(t)\triangleq\varphi_{i}([c_{i1}\eta_1(t),\ldots,c_{iN}\eta_N(t)]^{T})$,
$i\in\mathcal{N}$.  It can be seen that $\aleph_{i}(t)$ contains essentially
the same mode information as
$[c_{i1}\eta_1(t),\ldots,c_{iN}\eta_N(t)]^{T}$. Hence $\aleph_{i}(t)$ is
referred to as a neighboring mode process in the sequel.

\begin{rem}
Both the global and the local mode dependent control problems
 studied in~\cite{UP05:ijc,XUP09:tac} can be regarded as special cases
 of the neighboring mode dependent control problem with
 $\mathcal{C}=\mathbf{1}_{N\times N}$ (a matrix with all the elements being ones)
 and $I$, respectively.
\end{rem}

For the large-scale system~\eqref{sys} with the uncertainty constraints~\eqref{un:local},~\eqref{un:ic},
our objective is to design a
neighboring mode dependent decentralized control law
\begin{align}
  u_{i}(t) = K_{i}(\aleph_{i}(t))x_{i}(t),\quad i\in\mathcal{N}, \label{ctr}
\end{align}
such that the resulting closed-loop large-scale system is robustly stochastically stable
in the following sense.

\begin{defn}[\cite{XUP09:tac}]
 The closed-loop large-scale system corresponding to the
 uncertain large-scale system~\eqref{sys},~\eqref{un:local},~\eqref{un:ic} and the controller~\eqref{ctr} is said to be robustly
  stochastically stable if there exists a finite constant
  $\lambda\in\mathbb{R}^{+}$ such that
  \begin{align}
    \E \left(\int_{0}^{\infty}\sum\limits_{i=1}^{N}\norm{x_{i}(t)}^{2} dt\right)\le \lambda\norm{x_{0}}^{2} \label{eq:as}
  \end{align}
  for any $x_{0}=[x_{10}^{T},\ldots,x_{N0}^{T}]^{T}$, and any
 uncertainties $[\xi_{1}^{T}(t),\ldots,\xi_{N}^{T}(t)]^{T}\in\Xi$, $[r_{1}^{T}(t),\ldots,r_{N}^{T}(t)]^{T}\in\Pi$.
\end{defn}

For convenience, a set of many-to-one mappings $\phi_{i}$:
$\mathcal{M}_{S}\to\mathcal{M}_{S i}$, $i\in\mathcal{N}$, is introduced below:
  \begin{align}
\phi_{i}(\mu)&=\varphi_i(\diag[c_{i1},\ldots,c_{iN}]\cdot\psi^{-1}(\mu)).
\label{function}
 \end{align}
Note that $\phi_{i}$, $i\in\mathcal{N}$,
are also surjective mappings.

\begin{exmp}
  \label{exmp:1}
  Suppose $N=3$, $M_{1}=M_{2}=M_{3}=2$. When the mode processes $\eta_{i}(t)$, $i\in\mathcal{N}$,
  are independent of each other, the vector set $\mathcal{M}_{V}$ contains $8$
  elements, i.e., $\mathcal{M}_{V}=\{[\mu_{1},\mu_{2},\mu_{3}]^{T}: \mu_{i}=1, 2, i=1, 2, 3\}$.
  Now we assume that the mode processes $\eta_{i}(t)$, $i\in\mathcal{N}$,
  are subject to the constraints below:
  \begin{align*}
   \eta_1(t)=\eta_2(t) \quad &\text{if}\quad  \eta_3(t)=1, \\
     \eta_2(t)=1  \quad\quad\;\; &\text{if}\quad  \eta_1(t)=2.
  \end{align*}
  Then $\mathcal{M}_{V}$ contains only four
  elements, i.e., $\mathcal{M}_{V}=\{[1, 1, 1]^{T}, [1, 1, 2]^{T}, [1, 2, 2]^{T}, [2, 1, 2]^{T}\}$.
  Thus $\mathcal{M}_{S}=\{1, 2, 3, 4\}$.
  The mappings $\psi$, $\psi^{-1}$ between $\mathcal{M}_{V}$ and $\mathcal{M}_{S}$
  can be defined as follows:
  \begin{align*}
  \begin{matrix}
  &[1, 1, 1]^{T} \myArrow[\psi^{-1}]{\psi}  1,\quad
  [1, 1, 2]^{T} \myArrow[\psi^{-1}]{\psi}  2,\\
  &[1, 2, 2]^{T} \myArrow[\psi^{-1}]{\psi}  3, \quad
  [2, 1, 2]^{T} \myArrow[\psi^{-1}]{\psi}  4.
  \end{matrix}
  \end{align*}

  Suppose, for example, that
  $\mathcal{C} =\begin{bmatrix} 1 &  1   &  0 \\ 0 & 1  &  0 \\ 0 &  1 & 1 \\ \end{bmatrix}$.
  Then we have
  $\mathcal{M}_{V 1}=\{[1, 1, 0]^{T}, [1, 2, 0]^{T}, [2, 1, 0]^{T}\}$.
  Thus $\mathcal{M}_{S 1}=\{1, 2, 3\}$.
  The mapping $\varphi_{1}:\mathcal{M}_{V 1}\to\mathcal{M}_{S 1}$ can be defined as follows:
  \begin{align*}
  \begin{matrix}
  [1, 1, 0]^{T} \myRightarrow[\rule{0.5cm}{0cm}]{\varphi_{1}} 1,\quad
  [1, 2, 0]^{T} \myRightarrow[\rule{0.5cm}{0cm}]{\varphi_{1}} 2, \quad
  [2, 1, 0]^{T} \myRightarrow[\rule{0.5cm}{0cm}]{\varphi_{1}} 3.
  \end{matrix}
  \end{align*}
  In this case, by~\eqref{function}, the many-to-one mapping
  $\phi_{1}:\mathcal{M}_{S}\to\mathcal{M}_{S 1}$ is given by:
  \begin{align*}
  \begin{matrix}
  1 \myRightarrow[\rule{0.5cm}{0cm}]{\phi_{1}} 1,\quad
  2 \myRightarrow[\rule{0.5cm}{0cm}]{\phi_{1}} 1,\quad
  3  \myRightarrow[\rule{0.5cm}{0cm}]{\phi_{1}} 2, \quad
  4  \myRightarrow[\rule{0.5cm}{0cm}]{\phi_{1}}  3.
  \end{matrix}
  \end{align*}
  \end{exmp}

\section{Controller Design}
\label{sec:cd}
In this section, we first turn to a new uncertain Markovian jump
large-scale system which is similar to the large-scale system~\eqref{sys}.
Global mode dependent stabilizing controllers are designed for
this new large-scale system using the results of~\cite{UP05:ijc}.
Then we will show how to derive neighboring mode dependent stabilizing controllers
for the large-scale system~\eqref{sys} from these obtained global mode dependent controllers.
Finally, all of the conditions for the existence of
such neighboring mode dependent controllers
are combined as a feasible LMI problem with rank constraints.

Consider a new large-scale system $\tilde{\mathcal{S}}$ comprising $N$ subsystems $\tilde{\mathcal{S}_{i}}$, $i\in\mathcal{N}$.
The $i$th subsystem $\tilde{\mathcal{S}_{i}}$ is as follows~\cite{XUP09:tac}:
\begin{align}
\label{sys:large}
  \tilde{\mathcal{S}_{i}}:
  \left\{\begin{aligned}
    \dot{\tilde{x}}_{i}(t) &= \tilde{A}_{i}(\eta(t))\tilde{x}_{i}(t)
                    + \tilde{B}_{i}(\eta(t))\left[\tilde{u}_{i}(t)+\tilde{\xi}^{u}_{i}(t)\right] \\
    &\qquad + \tilde{E}_{i}(\eta(t))\tilde{\xi}_{i}(t) + \tilde{L}_{i}(\eta(t))\tilde{r}_{i}(t), \\
    \tilde{\zeta}_{i}(t)   &= \tilde{H}_{i}(\eta(t))\tilde{x}_{i}(t),
  \end{aligned}\right.
\end{align}
where $\tilde{A}_{i}(\mu)=A_{i}(\mu_{i})$,
$\tilde{B}_{i}(\mu)=B_{i}(\mu_{i})$, $\tilde{E}_{i}(\mu)=E_{i}(\mu_{i})$,
$\tilde{L}_{i}(\mu)=L_{i}(\mu_{i})$, $\tilde{H}_{i}(\mu)=H_{i}(\mu_{i})$ for all
$\mu\in\mathcal{M}_{S}$ and $\mu_{i}=\psi_{i}^{-1}(\mu)\in\mathcal{M}_{i}$, $i\in\mathcal{N}$.
The initial state $\tilde{x}_{i0}=x_{i0}$, $i\in\mathcal{N}$.
The uncertainties $\tilde{\xi}_{i}(t)$, $\tilde{r}_{i}(t)$, $\tilde{\xi}^{u}_{i}(t)$, $i\in\mathcal{N}$,
satisfy the following constraints, respectively.

\begin{defn}
  \label{new:defn:un:local}
  A locally square integrable signal
  $[\tilde{\xi}_{1}^{T}(t),\ldots,\tilde{\xi}_{N}^{T}(t)]^{T}$ represents an
  admissible local uncertainty input for the large-scale system~\eqref{sys:large}
  if, given any locally square integrable signals
  $[\tilde{u}_{1}^{T}(t),\ldots,\tilde{u}_{N}^{T}(t)]^{T}$,
  $[\tilde{\xi}^{uT}_{1}(t),\ldots,\tilde{\xi}^{uT}_{N}(t)]^{T}$,
  $[\tilde{r}_{1}^{T}(t),\ldots,\tilde{r}_{N}^{T}(t)]^{T}$, there exists a time sequence $\set{t_{l}}_{l=1}^{\infty}$,
   $t_{l}\to\infty$, such that for all $l$ and for all $i\in\mathcal{N}$,
  \begin{align}
    \E\left(\int_{0}^{t_{l}}\left[\norm{\tilde{\zeta}_{i}(t)}^{2}
     -\norm{\tilde{\xi}_{i}(t)}^{2}\right] dt\right)
    \ge -\tilde{x}_{i0}^{T}\bar{S}_{i}\tilde{x}_{i0}.  \label{new:un:local}
  \end{align}
  The set of all such admissible
  local uncertainty inputs is denoted by $\tilde\Xi$.
  \end{defn}

\begin{defn}
  \label{new:defn:un:ic}
  A locally square integrable signal  $[\tilde{r}_{1}^{T}(t),\ldots,\tilde{r}_{N}^{T}(t)]^{T}$ represents an admissible
  interconnection input for the large-scale system~\eqref{sys:large}
  if, given any locally square integrable signals $[\tilde{u}_{1}^{T}(t),\ldots,\tilde{u}_{N}^{T}(t)]^{T}$,
  $[\tilde{\xi}^{uT}_{1}(t),\ldots,\tilde{\xi}^{uT}_{N}(t)]^{T}$,
  $[\tilde{\xi}_{1}^{T}(t),\ldots,\tilde{\xi}_{N}^{T}(t)]^{T}$, there exists a time sequence $\set{t_{l}}_{l=1}^{\infty}$,
   $t_{l}\to\infty$, such that for all $l$ and for all $i\in\mathcal{N}$,
   \begin{align}
    &\E\left(\int_{0}^{t_{l}}\left[\sum\limits_{j=1,j\ne i}^{N} \norm{\tilde{\zeta}_{j}(t)}^{2} -
     \norm{\tilde{r}_{i}(t)}^{2} \right]
     dt\right)\ge -\tilde{x}_{i0}^{T}\tilde{S}_{i}\tilde{x}_{i0}. \label{new:un:ic}
  \end{align}
The set of all such admissible interconnection inputs is denoted by $\tilde\Pi$.
\end{defn}

\begin{defn}[\cite{XUP09:tac}]
  \label{dfn:1}
  Suppose $\beta^{u}_{i}(\mu)\in\mathbb{R}^{+}$,
  $i\in\mathcal{N}$, $\mu\in\mathcal{M}_{S}$. A locally square integrable
  signal $[\tilde{\xi}^{uT}_{1}(t),\ldots,\tilde{\xi}^{uT}_{N}(t)]^{T}$ represents an admissible input uncertainty for the large-scale system~\eqref{sys:large}
  if,  for all locally square integrable signals $[\tilde{u}_{1}^{T}(t),\ldots,\tilde{u}_{N}^{T}(t)]^{T}$,
   $[\tilde{\xi}_{1}^{T}(t),\ldots,\tilde{\xi}_{N}^{T}(t)]^{T}$,
   $[\tilde{r}_{1}^{T}(t),\ldots,\tilde{r}_{N}^{T}(t)]^{T}$ and for all $i\in\mathcal{N}$,
  \begin{align}
    \E\left(\beta^{u}_{i}(\eta(t))\norm{\tilde{x}_{i}(t)}^{2}
        -\norm{\tilde{\xi}^{u}_{i}(t)}^{2}\right)\ge 0 . \label{un:input}
  \end{align}
 The set of all such admissible input uncertainties is denoted by $\tilde\Xi^{u}$.
\end{defn}

We assume that the same sequences $\set{t_{l}}_{l=1}^{\infty}$ are chosen in
Definitions~\ref{new:defn:un:local},~\ref{new:defn:un:ic} whenever they correspond to the same
signals $[\tilde{\xi}_{1}^{T}(t),\ldots,\tilde{\xi}_{N}^{T}(t)]^{T}$,
$[\tilde{\xi}^{uT}_{1}(t),\ldots,\tilde{\xi}^{uT}_{N}(t)]^{T}$,
$[\tilde{r}_{1}^{T}(t),\ldots,\tilde{r}_{N}^{T}(t)]^{T}$,
$[\tilde{u}_{1}^{T}(t),\ldots,\tilde{u}_{N}^{T}(t)]^{T}$.  Furthermore, one
can verify that the system~\eqref{sys:large} has the same system matrices as
the system~\eqref{sys}, i.e., $\tilde{A}_{i}(\cdot)=A_{i}(\cdot)$,
$\tilde{B}_{i}(\cdot)=B_{i}(\cdot)$, $\tilde{E}_{i}(\cdot)=E_{i}(\cdot)$,
$\tilde{L}_{i}(\cdot)=L_{i}(\cdot)$, $\tilde{H}_{i}(\cdot)=H_{i}(\cdot)$ at
any time $t$. Using this fact, we will show that $\tilde\Xi=\Xi$,
$\tilde\Pi=\Pi$.

For convenience, let $\mathcal{L}^{m}(t)$ denote the set of all locally square integrable signals of dimension  $m=\sum_{i=1}^{N}m_{i}$, and let $\mathcal{L}^{s}(t)$ denote the set of all locally square integrable signals of dimension  $s=\sum_{i=1}^{N}s_{i}$. Given $[\tilde{\xi}_{1}^{T}(t),\ldots,\tilde{\xi}_{N}^{T}(t)]^{T}\in\tilde\Xi$.
By Definition~\ref{new:defn:un:local}, the inequality~\eqref{new:un:local} holds for any signals
$[\tilde{u}_{1}^{T}(t),\ldots,\tilde{u}_{N}^{T}(t)]^{T}\in\mathcal{L}^{m}(t)$,
$[\tilde{\xi}^{uT}_{1}(t),\ldots,\tilde{\xi}^{uT}_{N}(t)]^{T}\in\mathcal{L}^{m}(t)$,
$[\tilde{r}_{1}^{T}(t),\ldots,\tilde{r}_{N}^{T}(t)]^{T}\in\mathcal{L}^{s}(t)$.
This implies that the inequality~\eqref{new:un:local} holds for any
$[\tilde{u}_{1}^{T}(t),\ldots,\tilde{u}_{N}^{T}(t)]^{T}\in\mathcal{L}^{m}(t)$,
$[\tilde{r}_{1}^{T}(t),\ldots,\tilde{r}_{N}^{T}(t)]^{T}\in\mathcal{L}^{s}(t)$ and
$[\tilde{\xi}^{uT}_{1}(t),\ldots,\tilde{\xi}^{uT}_{N}(t)]^{T}\equiv 0$.
This is indeed the case defined by Definition~\ref{defn:un:local}.
Thus we have $[\tilde{\xi}_{1}^{T}(t),\ldots,\tilde{\xi}_{N}^{T}(t)]^{T}\in\Xi$, i.e., $\tilde\Xi\subset\Xi$.

To show $\Xi\subset\tilde\Xi$, suppose
$[\xi_{1}^{T}(t),\ldots,\xi_{N}^{T}(t)]^{T}\in\Xi$. Then we shall prove
$[\xi_{1}^{T}(t),\ldots,\xi_{N}^{T}(t)]^{T}\in\tilde\Xi$. By
Definition~\ref{new:defn:un:local}, we need to prove that the
inequality~\eqref{new:un:local} holds when we apply this signal
$[\xi_{1}^{T}(t),\ldots,\xi_{N}^{T}(t)]^{T}$ and any other signals
$[\tilde{u}_{1}^{T}(t),\ldots,\tilde{u}_{N}^{T}(t)]^{T}\in\mathcal{L}^{m}(t)$,
$[\tilde{\xi}^{uT}_{1}(t),\ldots,\tilde{\xi}^{uT}_{N}(t)]^{T}\in\mathcal{L}^{m}(t)$,
$[\tilde{r}_{1}^{T}(t),\ldots,\tilde{r}_{N}^{T}(t)]^{T}\in\mathcal{L}^{s}(t)$
to the large-scale system~\eqref{sys:large}. Note that the two inputs
$[\tilde{u}_{1}^{T}(t),\ldots,\tilde{u}_{N}^{T}(t)]^{T}$,
$[\tilde{\xi}^{uT}_{1}(t),\ldots,\tilde{\xi}^{uT}_{N}(t)]^{T}$ in the
large-scale system~\eqref{sys:large} can be considered as an equivalent input
$[\hat{u}_{1}^{T}(t),\ldots,\hat{u}_{N}^{T}(t)]^{T}$.  For any
$[\tilde{u}_{1}^{T}(t),\ldots,\tilde{u}_{N}^{T}(t)]^{T}\in\mathcal{L}^{m}(t)$,
$[\tilde{\xi}^{uT}_{1}(t),\ldots,\tilde{\xi}^{uT}_{N}(t)]^{T}\in\mathcal{L}^{m}(t)$,
we have
$[\hat{u}_{1}^{T}(t),\ldots,\hat{u}_{N}^{T}(t)]^{T}\in\mathcal{L}^{m}(t)$.
Thus, it suffices to prove that the inequality~\eqref{new:un:local} holds when
we apply $[\xi_{1}^{T}(t),\ldots,\xi_{N}^{T}(t)]^{T}$ and any other signals
$[\hat{u}_{1}^{T}(t),\ldots,\hat{u}_{N}^{T}(t)]^{T}\in\mathcal{L}^{m}(t)$,
$[\tilde{r}_{1}^{T}(t),\ldots,\tilde{r}_{N}^{T}(t)]^{T}\in\mathcal{L}^{s}(t)$.
But this follows directly from Definition~\ref{defn:un:local} and the fact
that $[\xi_{1}^{T}(t),\ldots,\xi_{N}^{T}(t)]^{T}\in\Xi$.  Hence
$[\xi_{1}^{T}(t),\ldots,\xi_{N}^{T}(t)]^{T}\in\tilde\Xi$ and
$\Xi\subset\tilde\Xi$. Therefore $\tilde\Xi=\Xi$. In a similar way, we can
prove that $\tilde\Pi=\Pi$.

We also mention that the values of $\beta^{u}_{i}(\mu)$, $\mu\in\mathcal{M}_{S}$, $i\in\mathcal{N}$, in Definition~\ref{dfn:1} can either be given appropriately in advance,
or be solved from numerical computation as illustrated in Theorem~\ref{thm3}.

Associated with the large-scale system~\eqref{sys:large} is the quadratic cost functional
as follows~\cite{XUP09:tac}:
  \begin{align}
    J\triangleq\E&\left(\int_{0}^{\infty}\sum\limits_{i=1}^{N}[\tilde{x}_{i}^{T}(t)\tilde{R}_{i}(\eta(t))\tilde{x}_{i}(t)\vphantom{\int_{0}^{\infty}\sum\limits_{i=1}^{N}}\right.\notag\\
    &\left.\vphantom{\int_{0}^{\infty}\sum\limits_{i=1}^{N}}+\tilde{u}_{i}^{T}(t)\tilde{G}_{i}(\eta(t))\tilde{u}_{i}(t)]dt\right), \label{performance}
  \end{align}
  where $\tilde{R}_{i}(\mu)\in\mathbb{S}^{+}$, $\tilde{G}_{i}(\mu)\in\mathbb{S}^{+}$, $\mu\in\mathcal{M}_{S}$, $i\in\mathcal{N}$, are given weighting matrices.

For the large-scale system~\eqref{sys:large}
with the uncertainty constraints~\eqref{new:un:local}, \eqref{new:un:ic}, \eqref{un:input},
global mode dependent stabilizing controllers can be
designed using the technique developed in~\cite{UP05:ijc}.
Furthermore, applying these controllers to the large-scale system~\eqref{sys:large}
will yield a cost upper bound, i.e., sup$_{\tilde\Xi,\tilde\Pi,\tilde\Xi^{u}}J<c_{1}$, $c_{1}\in\mathbb{R}^{+}$.
This result is stated in the following theorem.

\begin{thm}[\cite{XUP09:tac}]
  \label{thm1}
   If there exist matrices $X_{i}(\mu)\in\mathbb{S}^{+}$ and
   scalars $\tau_{i}\in\mathbb{R}^{+}$,
      $\theta_{i}\in\mathbb{R}^{+}$, $\tau^{u}_{i}\in\mathbb{R}^{+}$,
       $\mu\in\mathcal{M}_{S}$, $i\in\mathcal{N}$, such that
            \begin{align}\label{eq:are}
         &\tilde{A}_{i}^{T}(\mu)X_{i}(\mu) + X_{i}(\mu)\tilde{A}_{i}(\mu)
        + \sum\limits_{\nu=1}^{M}q_{\mu\nu}X_{i}(\nu)
        +\tilde{R}_{i}(\mu)\notag\\
          &+ X_{i}(\mu)\left(\bar{B}_{2i}(\mu)\bar{B}_{2i}^{T}(\mu) -\tilde{B}_{i}(\mu)\tilde{G}_{i}^{-1}(\mu)\tilde{B}_{i}^{T}(\mu)\right)X_{i}(\mu)\notag\\
    &+ \tau^{u}_{i}\beta ^{u}_{i}(\mu)I+(\tau_{i}+\bar{\theta}_{i})\tilde{H}_{i}^{T}(\mu)\tilde{H}_{i}(\mu)  < 0,
      \end{align}
      where
        $\bar{B}_{2i}(\mu) =
          \begin{bmatrix}
          (\tau^{u}_{i})^{-1/2}\tilde{B}_{i}(\mu)\; \tau_{i}^{-1/2}\tilde{E}_{i}(\mu)\; \theta_{i}^{-1/2}\tilde{L}_{i}(\mu)
          \end{bmatrix}$
       and $\bar{\theta}_{i}=\sum_{j=1,j\ne i}^{N}\theta_{j}$,
       then the global mode dependent controllers given by
      \begin{equation}\label{ctr:large:design}
      \left\{\begin{aligned}
       \tilde{u}_{i}(t) &= \tilde{K}_{i}(\eta(t))\tilde{x}_{i}(t), \\
        \tilde{K}_{i}(\mu) &= -\tilde{G}_{i}^{-1}(\mu)\tilde{B}_{i}^{T}(\mu)X_{i}(\mu),
      \end{aligned}\right.
      \end{equation}
      $\mu\in\mathcal{M}_{S}$, $i\in\mathcal{N}$, robustly stabilize the uncertain large-scale
      system~\eqref{sys:large} with the uncertainty constraints~\eqref{new:un:local}, \eqref{new:un:ic}, \eqref{un:input}, and achieve a bounded system cost $ J \le
      \sum_{i=1}^{N}\tilde{x}_{i0}^{T}\left[\sum_{\mu=1}^{M}\pi_{\mu}X_{i}(\mu)+\tau_{i}\bar{S}_{i}
        +\theta_{i}\tilde{S}_{i}\right]\tilde{x}_{i0}$.
\end{thm}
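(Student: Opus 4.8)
The plan is to run a stochastic dissipativity (S-procedure) argument built around the mode-dependent quadratic storage function
\[
  V(\tilde{x},\mu)=\sum_{i=1}^{N}\tilde{x}_{i}^{T}X_{i}(\mu)\tilde{x}_{i},\qquad
  \tilde{x}=[\tilde{x}_{1}^{T},\ldots,\tilde{x}_{N}^{T}]^{T},\ \mu\in\mathcal{M}_{S},
\]
which is positive definite since every $X_{i}(\mu)\in\mathbb{S}^{+}$. Substituting the feedback~\eqref{ctr:large:design} into~\eqref{sys:large} gives closed-loop subsystems with state matrix $A_{cl,i}(\mu)=\tilde{A}_{i}(\mu)-\tilde{B}_{i}(\mu)\tilde{G}_{i}^{-1}(\mu)\tilde{B}_{i}^{T}(\mu)X_{i}(\mu)$ and the three disturbance channels $\tilde{B}_{i}(\mu)\tilde{\xi}^{u}_{i}$, $\tilde{E}_{i}(\mu)\tilde{\xi}_{i}$, $\tilde{L}_{i}(\mu)\tilde{r}_{i}$. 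First I would apply the infinitesimal generator $\mathcal{L}$ of the joint process $(\tilde{x}(t),\eta(t))$ to $V$: for $\eta(t)=\mu$, $\mathcal{L}V$ equals a sum over $i$ of the quadratic form of $A_{cl,i}^{T}X_{i}+X_{i}A_{cl,i}+\sum_{\nu=1}^{M}q_{\mu\nu}X_{i}(\nu)$ in $\tilde{x}_{i}$ plus the three cross terms $2\tilde{x}_{i}^{T}X_{i}\tilde{B}_{i}\tilde{\xi}^{u}_{i}$, $2\tilde{x}_{i}^{T}X_{i}\tilde{E}_{i}\tilde{\xi}_{i}$, $2\tilde{x}_{i}^{T}X_{i}\tilde{L}_{i}\tilde{r}_{i}$.

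Next I would complete the square in each cross term against the matching multiplier, e.g.\ $2\tilde{x}_{i}^{T}X_{i}\tilde{E}_{i}\tilde{\xi}_{i}\le\tau_{i}^{-1}\tilde{x}_{i}^{T}X_{i}\tilde{E}_{i}\tilde{E}_{i}^{T}X_{i}\tilde{x}_{i}+\tau_{i}\norm{\tilde{\xi}_{i}}^{2}$, and likewise with $\tau^{u}_{i}$ and $\theta_{i}$ on the $\tilde{\xi}^{u}_{i}$- and $\tilde{r}_{i}$-channels; the three resulting state residues combine into $\tilde{x}_{i}^{T}X_{i}\bar{B}_{2i}(\mu)\bar{B}_{2i}^{T}(\mu)X_{i}\tilde{x}_{i}$ by the definition of $\bar{B}_{2i}$. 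Adding the cost integrand, whose $\tilde{u}_{i}$-part equals $\tilde{x}_{i}^{T}X_{i}\tilde{B}_{i}\tilde{G}_{i}^{-1}\tilde{B}_{i}^{T}X_{i}\tilde{x}_{i}$, and simplifying the closed-loop term via the identity $A_{cl,i}^{T}X_{i}+X_{i}A_{cl,i}+X_{i}\tilde{B}_{i}\tilde{G}_{i}^{-1}\tilde{B}_{i}^{T}X_{i}=\tilde{A}_{i}^{T}X_{i}+X_{i}\tilde{A}_{i}-X_{i}\tilde{B}_{i}\tilde{G}_{i}^{-1}\tilde{B}_{i}^{T}X_{i}$, one finds that the $i$-th block of $\bigl(\mathcal{L}V+\text{cost integrand}\bigr)$ is bounded above, for $\eta(t)=\mu$, by
\[
  \tilde{x}_{i}^{T}\Lambda_{i}(\mu)\tilde{x}_{i}-\tau^{u}_{i}\beta^{u}_{i}(\mu)\norm{\tilde{x}_{i}}^{2}-(\tau_{i}+\bar{\theta}_{i})\norm{\tilde{\zeta}_{i}}^{2}+\tau^{u}_{i}\norm{\tilde{\xi}^{u}_{i}}^{2}+\tau_{i}\norm{\tilde{\xi}_{i}}^{2}+\theta_{i}\norm{\tilde{r}_{i}}^{2},
\]
where $\Lambda_{i}(\cdot)$ denotes the symmetric matrix on the left-hand side of~\eqref{eq:are} and $\tilde{\zeta}_{i}=\tilde{H}_{i}\tilde{x}_{i}$. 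By~\eqref{eq:are} the term $\tilde{x}_{i}^{T}\Lambda_{i}(\mu)\tilde{x}_{i}$ is nonpositive and may be dropped; summing over $i$ and using the index interchange $\sum_{i}\bar{\theta}_{i}\norm{\tilde{\zeta}_{i}}^{2}=\sum_{i}\theta_{i}\sum_{j\ne i}\norm{\tilde{\zeta}_{j}}^{2}$ (which is exactly why $\bar{\theta}_{i}$ has its stated form), I arrive at the dissipation inequality
\begin{align*}
  &\mathcal{L}V+\sum_{i=1}^{N}\bigl[\tilde{x}_{i}^{T}\tilde{R}_{i}\tilde{x}_{i}+\tilde{u}_{i}^{T}\tilde{G}_{i}\tilde{u}_{i}\bigr]
   +\sum_{i=1}^{N}\tau_{i}\bigl[\norm{\tilde{\zeta}_{i}}^{2}-\norm{\tilde{\xi}_{i}}^{2}\bigr]\\
  &\qquad+\sum_{i=1}^{N}\theta_{i}\bigl[\sum\nolimits_{j\ne i}\norm{\tilde{\zeta}_{j}}^{2}-\norm{\tilde{r}_{i}}^{2}\bigr]
   +\sum_{i=1}^{N}\tau^{u}_{i}\bigl[\beta^{u}_{i}(\eta)\norm{\tilde{x}_{i}}^{2}-\norm{\tilde{\xi}^{u}_{i}}^{2}\bigr]\le 0,
\end{align*}
which holds along every closed-loop trajectory.

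The remaining steps are standard. I would integrate this inequality on $[0,t_{l}]$ along the closed-loop process, take expectations, and use Dynkin's formula to replace $\E\int_{0}^{t_{l}}\mathcal{L}V\,dt$ by $\E V(\tilde{x}(t_{l}),\eta(t_{l}))-\E V(\tilde{x}_{0},\eta_{0})$. Dropping the nonnegative term $\E V(\tilde{x}(t_{l}),\eta(t_{l}))$, lower-bounding the $\tau_{i}$- and $\theta_{i}$-weighted integrals by the IQCs~\eqref{new:un:local} and~\eqref{new:un:ic} (summed with the weights $\tau_{i}$ and $\theta_{i}$, respectively) and using that the $\tau^{u}_{i}$-weighted integral is nonnegative by~\eqref{un:input}, I get $\E\int_{0}^{t_{l}}\sum_{i}[\tilde{x}_{i}^{T}\tilde{R}_{i}\tilde{x}_{i}+\tilde{u}_{i}^{T}\tilde{G}_{i}\tilde{u}_{i}]\,dt\le\E V(\tilde{x}_{0},\eta_{0})+\sum_{i}(\tau_{i}\tilde{x}_{i0}^{T}\bar{S}_{i}\tilde{x}_{i0}+\theta_{i}\tilde{x}_{i0}^{T}\tilde{S}_{i}\tilde{x}_{i0})$. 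Since $\E V(\tilde{x}_{0},\eta_{0})=\sum_{i}\tilde{x}_{i0}^{T}\bigl(\sum_{\mu}\pi_{\mu}X_{i}(\mu)\bigr)\tilde{x}_{i0}$ by the initial distribution $\pi$, and since the integrand is nonnegative (the weighting matrices being positive definite), letting $t_{l}\to\infty$ along the sequence furnished by Definitions~\ref{new:defn:un:local} and~\ref{new:defn:un:ic} and invoking monotone convergence yields exactly the claimed bound $J\le\sum_{i}\tilde{x}_{i0}^{T}[\sum_{\mu}\pi_{\mu}X_{i}(\mu)+\tau_{i}\bar{S}_{i}+\theta_{i}\tilde{S}_{i}]\tilde{x}_{i0}$. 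Robust stochastic stability of~\eqref{sys:large} then follows because $\tilde{R}_{i}(\mu)\ge\epsilon I$ for some $\epsilon>0$ uniformly over the finite mode set, whence $\E\int_{0}^{\infty}\sum_{i}\norm{\tilde{x}_{i}}^{2}\,dt\le\epsilon^{-1}J\le\lambda\norm{\tilde{x}_{0}}^{2}$ with $\lambda=\epsilon^{-1}\max_{i}\norm{\sum_{\mu}\pi_{\mu}X_{i}(\mu)+\tau_{i}\bar{S}_{i}+\theta_{i}\tilde{S}_{i}}$.

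I expect the main obstacle to be not the algebra but the rigorous use of Dynkin's formula. Assumption~\ref{locally:squire:integrable} only guarantees local square integrability of $\tilde{x}(t)$, so a priori neither $\E V(\tilde{x}(t_{l}),\eta(t_{l}))$ nor $\E\int_{0}^{t_{l}}\mathcal{L}V\,dt$ is finite, and the quadratic growth of $V$ combined with the unbounded disturbance inputs rules out a direct application. I would handle this by localizing with the stopping times $\rho_{n}=\inf\{t:\norm{\tilde{x}(t)}\ge n\}$: apply Dynkin on $[0,t_{l}\wedge\rho_{n}]$, where every term is integrable; carry the IQC and nonnegativity estimates through the stopped integrals; and then let $n\to\infty$, using Fatou's lemma on the cost term and monotone or dominated convergence on the IQC integrals (justified by the local square integrability of $\tilde{x}_{i}$, $\tilde{u}_{i}$, $\tilde{\xi}_{i}$, $\tilde{r}_{i}$ and $\tilde{\xi}^{u}_{i}$ on each bounded interval, together with the identities $\tilde\Xi=\Xi$ and $\tilde\Pi=\Pi$ established earlier). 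Well-posedness of the closed loop is not a concern, since the gains $\tilde{K}_{i}(\cdot)$ are bounded (finitely many modes) and thus represent a bounded state perturbation of~\eqref{sys:large}. Once the estimate survives the limit $n\to\infty$, the passage $t_{l}\to\infty$ and the extraction of $\lambda$ are routine.
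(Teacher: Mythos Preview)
The paper does not prove Theorem~\ref{thm1}; it is quoted verbatim as a known result from~\cite{XUP09:tac} (building on the technique of~\cite{UP05:ijc}), and no argument is given in the present paper. So there is no ``paper's own proof'' to compare against here.

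That said, your proposal is the standard route for results of this type, and it is essentially correct. The algebra checks out: with $\bar{B}_{2i}(\mu)$ as defined, the three completion-of-squares residues do collapse to $X_{i}\bar{B}_{2i}\bar{B}_{2i}^{T}X_{i}$; the identity $A_{cl,i}^{T}X_{i}+X_{i}A_{cl,i}+X_{i}\tilde{B}_{i}\tilde{G}_{i}^{-1}\tilde{B}_{i}^{T}X_{i}=\tilde{A}_{i}^{T}X_{i}+X_{i}\tilde{A}_{i}-X_{i}\tilde{B}_{i}\tilde{G}_{i}^{-1}\tilde{B}_{i}^{T}X_{i}$ is the right one once the $\tilde{u}_{i}^{T}\tilde{G}_{i}\tilde{u}_{i}$ cost term is added in; and the index swap $\sum_{i}\bar{\theta}_{i}\norm{\tilde{\zeta}_{i}}^{2}=\sum_{i}\theta_{i}\sum_{j\ne i}\norm{\tilde{\zeta}_{j}}^{2}$ is exactly what makes the interconnection IQC~\eqref{new:un:ic} appear with the weights $\theta_{i}$. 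This is precisely the S-procedure/dissipation argument one finds in~\cite{UP05:ijc,XUP09:tac}.

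One small point on the order of operations in your last paragraph. The IQCs~\eqref{new:un:local} and~\eqref{new:un:ic} are only asserted at the distinguished times $t_{l}$, not at the stopped times $t_{l}\wedge\rho_{n}$. So do not ``carry the IQC estimates through the stopped integrals''; instead, first pass to the limit $n\to\infty$ in the \emph{raw} integrated dissipation inequality (every individual integrand is dominated on $[0,t_{l}]$ by Assumption~\ref{locally:squire:integrable}, so dominated convergence applies to the sign-indefinite IQC integrands, and Fatou/monotone convergence to the nonnegative cost integrand), obtaining the inequality on $[0,t_{l}]$, and \emph{then} invoke~\eqref{new:un:local},~\eqref{new:un:ic},~\eqref{un:input}. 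With that ordering the localization argument is airtight and the rest of your plan goes through.
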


After obtaining the global mode dependent stabilizing controllers~\eqref{ctr:large:design}
for the large-scale system~\eqref{sys:large}, the next step is to derive neighboring mode
dependent stabilizing controllers for the large-scale system~\eqref{sys}.
The following result is an extension of Theorem~1 in~\cite{XUP09:tac} to
the neighboring mode dependent control case.
The proof is similar to that of Theorem~1 in~\cite{XUP09:tac} and hence is omitted.

\begin{thm}
  \label{thm2}
  Given the global mode dependent controllers~\eqref{ctr:large:design}
  which stabilize the large-scale system~\eqref{sys:large} with the uncertainty
  constraints~\eqref{new:un:local},~\eqref{new:un:ic},~\eqref{un:input}.
  If the gains $K_{i}(\cdot)$ in the controllers~\eqref{ctr} are chosen to satisfy
  \begin{align}
    \label{eq:thm2}
    \norm{K_{i}(\sigma_{i})-\tilde{K}_{i}(\mu)}^{2} &\le \beta^{u}_{i}(\mu)
  \end{align}
  for all $\mu\in\mathcal{M}_{S}$,  $\sigma_{i}=\phi_{i}(\mu)\in\mathcal{M}_{S i}$, $i\in\mathcal{N}$,
  then the neighboring mode dependent controllers~\eqref{ctr}
  stabilize the large-scale system~\eqref{sys}
  with the uncertainty constraints~\eqref{un:local},~\eqref{un:ic}.
\end{thm}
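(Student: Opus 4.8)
The plan is to deduce Theorem~\ref{thm2} from its standing hypothesis that the global mode dependent controllers~\eqref{ctr:large:design} robustly stabilize the auxiliary system~\eqref{sys:large}, by exhibiting the closed loop of~\eqref{sys} under~\eqref{ctr} as one particular closed loop of~\eqref{sys:large} under~\eqref{ctr:large:design} for a suitably chosen input uncertainty. Two elementary facts along the sample paths set this up. Since $\eta(t)=\psi\big([\eta_1(t),\ldots,\eta_N(t)]^{T}\big)$ we have $\psi^{-1}(\eta(t))=[\eta_1(t),\ldots,\eta_N(t)]^{T}$, so~\eqref{function} gives
\begin{align*}
\phi_i(\eta(t))&=\varphi_i\big(\diag[c_{i1},\ldots,c_{iN}]\,[\eta_1(t),\ldots,\eta_N(t)]^{T}\big)\\
&=\varphi_i\big([c_{i1}\eta_1(t),\ldots,c_{iN}\eta_N(t)]^{T}\big)=\aleph_i(t)
\end{align*}
for every $i\in\mathcal{N}$; and since $\tilde{A}_i(\mu)=A_i(\psi_i^{-1}(\mu))$ and likewise for $\tilde{B}_i,\tilde{E}_i,\tilde{L}_i,\tilde{H}_i$, the systems~\eqref{sys} and~\eqref{sys:large} carry identical coefficient matrices at each time $t$, with $\tilde{A}_i(\eta(t))=A_i(\eta_i(t))$ and so on.

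Next I would apply~\eqref{ctr} to~\eqref{sys} and split the control input as
\begin{align*}
u_i(t)&=K_i(\aleph_i(t))x_i(t)=\tilde{K}_i(\eta(t))x_i(t)+\tilde{\xi}^{u}_i(t),\\
\tilde{\xi}^{u}_i(t)&:=\big(K_i(\phi_i(\eta(t)))-\tilde{K}_i(\eta(t))\big)x_i(t).
\end{align*}
Setting $\tilde{x}_i(t)=x_i(t)$, $\tilde{\xi}_i(t)=\xi_i(t)$, $\tilde{r}_i(t)=r_i(t)$, $\tilde{x}_{i0}=x_{i0}$ and using the matrix identities above, one checks term by term that the closed loop of~\eqref{sys} with~\eqref{ctr} coincides with the closed loop of~\eqref{sys:large} driven by $\tilde{u}_i(t)=\tilde{K}_i(\eta(t))\tilde{x}_i(t)$ together with this very $\tilde{\xi}^{u}_i$; the two state trajectories therefore agree for all $t\ge 0$. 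Existence and local square integrability of this trajectory on each bounded interval follow from Assumption~\ref{locally:squire:integrable} by a contraction argument as in~\cite{XUP09:tac}, which also makes $\tilde{\xi}^{u}_i$ locally square integrable because the mode-dependent gain difference is bounded.

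It then remains to verify that the substituted signals are admissible uncertainties for~\eqref{sys:large}. For $\tilde{\xi}_i=\xi_i$ and $\tilde{r}_i=r_i$ this is immediate from the identities $\tilde{\Xi}=\Xi$, $\tilde{\Pi}=\Pi$ established before Theorem~\ref{thm1}, the pair $(\tilde{u}_i,\tilde{\xi}^{u}_i)$ playing the role of the equivalent control input used there so that the time sequences $\set{t_l}$ transfer consistently across Definitions~\ref{defn:un:local},~\ref{defn:un:ic} and~\ref{new:defn:un:local},~\ref{new:defn:un:ic}. The only point at which hypothesis~\eqref{eq:thm2} enters is the sample-path bound
\begin{align*}
\norm{\tilde{\xi}^{u}_i(t)}^{2}\le\norm{K_i(\phi_i(\eta(t)))-\tilde{K}_i(\eta(t))}^{2}\,\norm{x_i(t)}^{2}\le\beta^{u}_i(\eta(t))\,\norm{x_i(t)}^{2},
\end{align*}
whose last inequality is~\eqref{eq:thm2} with $\mu=\eta(t)$ and $\sigma_i=\phi_i(\mu)=\aleph_i(t)$. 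Since this bound holds pathwise regardless of the remaining signals, taking expectations yields~\eqref{un:input}, so $[\tilde{\xi}^{uT}_{1},\ldots,\tilde{\xi}^{uT}_{N}]^{T}\in\tilde{\Xi}^{u}$ in the sense of Definition~\ref{dfn:1}, understood (as in~\cite{XUP09:tac}) with $\tilde{\xi}^{u}_i$ realized by the state-feedback map $\tilde{x}_i\mapsto\big(K_i(\phi_i(\eta))-\tilde{K}_i(\eta)\big)\tilde{x}_i$. Finally, robust stochastic stability of~\eqref{sys:large} under~\eqref{ctr:large:design} furnishes a finite $\lambda$ with $\E\left(\int_{0}^{\infty}\sum_{i=1}^{N}\norm{\tilde{x}_i(t)}^{2}\,dt\right)\le\lambda\norm{x_0}^{2}$ for all admissible $\tilde{\xi}_i,\tilde{r}_i,\tilde{\xi}^{u}_i$; specializing to the trajectory constructed above and using $\tilde{x}_i=x_i$ gives~\eqref{eq:as}, i.e.\ robust stochastic stability of~\eqref{sys}. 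The main obstacle is the bookkeeping in this last step---checking that $\tilde{\xi}^{u}_i$ satisfies Definition~\ref{dfn:1} uniformly over the other inputs and that the time sequences of the various IQC definitions can be matched---after which the conclusion is a direct appeal to the stabilization property of~\eqref{ctr:large:design}.
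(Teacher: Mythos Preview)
Your proposal is correct and follows precisely the route the paper alludes to: the paper omits the proof, stating only that it is similar to that of Theorem~1 in~\cite{XUP09:tac}, and your argument---rewriting the closed loop of~\eqref{sys} under~\eqref{ctr} as the closed loop of~\eqref{sys:large} under~\eqref{ctr:large:design} with the gain mismatch absorbed into an admissible input uncertainty $\tilde{\xi}^{u}_i$ via~\eqref{eq:thm2}---is exactly that approach, adapted from the local to the neighboring mode setting through the identity $\phi_i(\eta(t))=\aleph_i(t)$. Your care in noting that Definition~\ref{dfn:1} must be read with $\tilde{\xi}^{u}_i$ realized as a state-feedback map is appropriate and matches the convention in~\cite{XUP09:tac}.
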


In the following remark, we use an example to illustrate the fact that Theorem~\ref{thm2} is less conservative than Theorem~1 in~\cite{XUP09:tac}.

\begin{rem}
Consider the Markovian jump large-scale system in Example~\ref{exmp:1}.
Three control gains need to be scheduled for the first local controller
if using the neighboring mode dependent control approach, while two control gains are needed
if using the local mode dependent control approach.
We denote the three neighboring mode dependent control gains
as $K_{1}(\sigma_{1}), \sigma_{1}\in\mathcal{M}_{S 1}=\{1, 2, 3\}$
and the two local mode dependent control gains as
$\mathcal{K}_{1}(\mu_{1}), \mu_{1}\in\mathcal{M}_{1}=\{1, 2\}$.
For comparison, given the global mode dependent control gains $\tilde{K}_{1}(\mu)$
and the scalars $\beta^{u}_{1}(\mu), \mu=1,2,3$, the constraints imposed
on $\mathcal{K}_{1}(1)$, $K_{1}(1)$  are specified as follows
based on Theorem~1 in~\cite{XUP09:tac} and our Theorem~\ref{thm2}, respectively:
  \begin{align}
    \left\{
    \begin{aligned}
    \norm{\mathcal{K}_{1}(1)-\tilde{K}_{1}(1)}^{2} &\le \beta^{u}_{1}(1),\\
    \norm{\mathcal{K}_{1}(1)-\tilde{K}_{1}(2)}^{2} &\le \beta^{u}_{1}(2),\\
    \norm{\mathcal{K}_{1}(1)-\tilde{K}_{1}(3)}^{2} &\le \beta^{u}_{1}(3),
    \end{aligned}\right.\\
    \left\{
    \begin{aligned}
    \norm{K_{1}(1)-\tilde{K}_{1}(1)}^{2} &\le \beta^{u}_{1}(1),\\
    \norm{K_{1}(1)-\tilde{K}_{1}(2)}^{2} &\le \beta^{u}_{1}(2).
    \end{aligned}\right.
  \end{align}
  These inequalities are illustrated in Fig.~\ref{fig1} where each circle
  denotes a Euclidean ball.  $\tilde{K}_{1}(\mu)$ is the center and
  $\sqrt{\beta^{u}_{1}(\mu)}$ the radius of the ball for $\mu=1,2,3$.  As
  shown in Fig.~\ref{fig1}, the set where $\mathcal{K}_{1}(1)$ takes values is
  only a subset of the set where $K_{1}(1)$ takes values. Hence the proposed
  framework provides greater flexibility in choosing control gains.
  Potentially, this will allow one to achieve better system performance than
  obtained using local mode dependent controllers.  We also mention that if
  the Euclidean ball centered at $\tilde{K}_{1}(3)$ does not intersect the
  Euclidean ball centered at $\tilde{K}_{1}(1)$ (or $\tilde{K}_{1}(2)$), then no
  local mode dependent controllers exist. However, the existence of the
  neighboring mode dependent controllers is not affected.  Therefore our
  technique potentially produces less conservative results than that
  in~\cite{XUP09:tac}.
\end{rem}

\begin{figure}[htbp]
\begin{center}
\includegraphics[width=6cm]{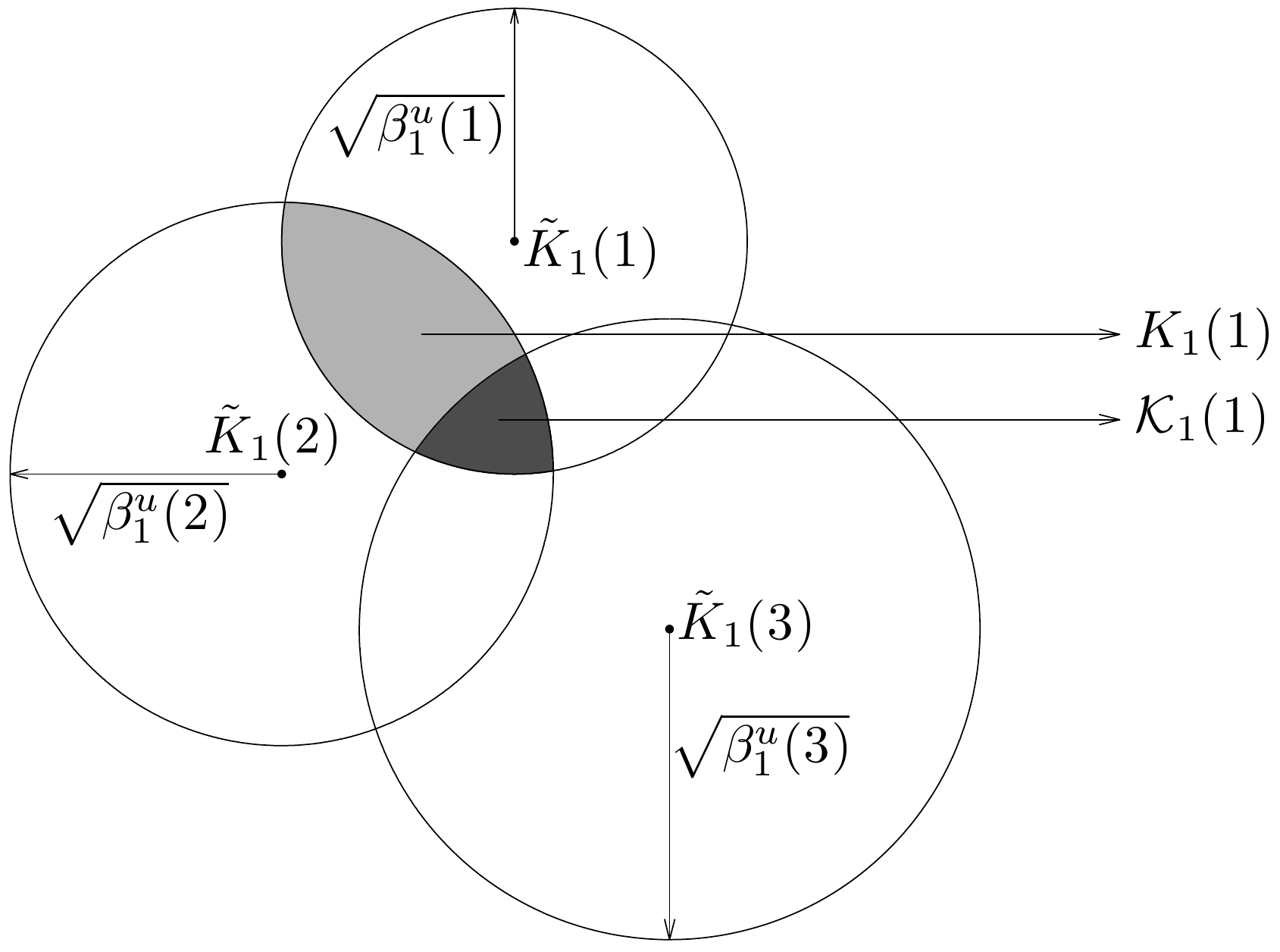}
\caption{Illustration of the constraints.}
\label{fig1}
\end{center}
\end{figure}

Next, the conditions in Theorem~\ref{thm1} and Theorem~\ref{thm2} are combined
and recast as a rank constrained LMI problem.  Although rank constrained LMI
problems are non-convex in general, numerical methods such as the LMIRank
toolbox~\cite{lmirank} often yield good results in solving these problems.

\begin{thm}
  \label{thm3}
  Suppose there exist matrices $X_{i}(\mu)\in\mathbb{S}^{+}$, $Y_{i}(\mu)\in\mathbb{S}^{+}$,
 $K_{i}(\sigma_{i})\in\mathbb{R}^{m_{i} \times n_{i}}$ and scalars
 $\bar{\beta}_{i}(\mu)\in\mathbb{R}^{+}$, $\tilde{\beta}_{i}(\mu)\in\mathbb{R}^{+}$,
$\tilde{\tau}^{u}_{i}\in\mathbb{R}^{+}$, $\tilde{\tau}_{i}\in\mathbb{R}^{+}$,
  $\tilde{\theta}_{i}\in\mathbb{R}^{+}$, $\mu\in\mathcal{M}_{S}$, $i\in\mathcal{N}$,
  such that the following inequalities hold:
  \begin{align}
  \begin{bmatrix}
    \mathcal{G}_{i11}(\mu) & \mathcal{G}_{i12}(\mu) & \mathcal{G}_{i13}(\mu)\\
    \mathcal{G}_{i12}^{T}(\mu) & \mathcal{G}_{i22}(\mu) &0\\
    \mathcal{G}_{i13}^{T}(\mu) &0 &\mathcal{G}_{i33}(\mu)
  \end{bmatrix} &< 0, \label{eq:lmi} \\
  \begin{bmatrix}
    -\tilde{\tau}^{u}_{i}I & \Upsilon_{i}^{T}(\mu) \\
    \Upsilon_{i}(\mu) & -\tilde{\beta}_{i}(\mu)I
  \end{bmatrix} &\le 0, \label{eq:theorem2lmi}\\
    \rank\left(
  \begin{bmatrix}
    \bar{\beta}_{i}(\mu) & 1 \\
    1 & \tilde{\beta}_{i}(\mu)
  \end{bmatrix}\right)&\le 1, \label{beta_inverse} \\
  \rank\left(
  \begin{bmatrix}
    Y_{i}(\mu) & I \\
    I & X_{i}(\mu)
  \end{bmatrix}\right)&\le n_{i},  \label{X_inverse}
\end{align}
where
\begin{align*}
  \mathcal{G}_{i11}(\mu) &= Y_{i}(\mu)\tilde{A}_{i}^{T}(\mu)
      + \tilde{A}_{i}(\mu)Y_{i}(\mu) + q_{\mu\mu}Y_{i}(\mu)\\
    &\quad -\tilde{B}_{i}(\mu)\tilde{G}_{i}^{-1}(\mu)\tilde{B}_{i}^{T}(\mu)+\tilde{\tau}^{u}_{i}\tilde{B}_{i}(\mu)\tilde{B}_{i}^{T}(\mu)\\
    &\quad+\tilde{\tau}_{i}\tilde{E}_{i}(\mu)\tilde{E}_{i}^{T}(\mu) + \tilde{\theta}_{i}\tilde{L}_{i}(\mu)\tilde{L}_{i}^{T}(\mu), \\
  \mathcal{G}_{i12}(\mu) &=
    Y_{i}(\mu)[\sqrt{q_{\mu 1}}I\; \cdots\; \sqrt{q_{\mu (\mu-1)}}I \\
    &\quad \sqrt{q_{\mu (\mu+1)}}I\; \cdots\; \sqrt{q_{\mu M}}I],   \\
      \mathcal{G}_{i13}(\mu) &=
    Y_{i}(\mu)\left[I\;I\;\tilde{H}_{i}^{T}(\mu)\;\cdots\;\tilde{H}_{i}^{T}(\mu)\right],   \\
  \mathcal{G}_{i22}(\mu)  &= -\diag[Y_{i}(1),\cdots,Y_{i}(\mu-1),\\
  &\quad Y_{i}(\mu+1),\cdots,Y_{i}(M)],\\
      \mathcal{G}_{i33}(\mu)  &= -\diag[\tilde{R}_{i}^{-1}(\mu),\bar{\beta}_{i}(\mu)I,\tilde{\tau}_{i}I,\tilde{\theta}_{1}I,\cdots,\tilde{\theta}_{i-1}I,\\
   &\quad \tilde{\theta}_{i+1}I,\cdots,\tilde{\theta}_{N}I],\\
  \Upsilon_{i}(\mu) &=K_{i}(\phi_{i}(\mu))+\tilde{G}_{i}^{-1}(\mu)\tilde{B}_{i}^{T}(\mu)X_{i}(\mu).
\end{align*}
Then a stabilizing controller~\eqref{ctr} is given by:
$u_{i}(t)= K_{i}(\sigma_{i})x_{i}(t)$, for $\aleph_{i}(t)=\sigma_{i}\in\mathcal{M}_{S i}$, $i\in\mathcal{N}$.
\end{thm}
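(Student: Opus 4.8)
The plan is to read the two matrix inequalities~\eqref{eq:lmi} and~\eqref{eq:theorem2lmi} as congruence-transformed (``dualized'') forms of conditions~\eqref{eq:are} and~\eqref{eq:thm2}, with the two rank constraints~\eqref{beta_inverse} and~\eqref{X_inverse} only enforcing the nonconvex substitutions $Y_{i}(\mu)=X_{i}^{-1}(\mu)$ and $\bar{\beta}_{i}(\mu)=\tilde{\beta}_{i}^{-1}(\mu)$. First I would dispose of those rank constraints. Since $X_{i}(\mu),Y_{i}(\mu)\in\mathbb{S}^{+}$ and $\bar{\beta}_{i}(\mu),\tilde{\beta}_{i}(\mu)\in\mathbb{R}^{+}$, the block-matrix rank identity gives $\rank\begin{bmatrix}Y_{i}(\mu)&I\\I&X_{i}(\mu)\end{bmatrix}=n_{i}+\rank\!\left(X_{i}(\mu)-Y_{i}^{-1}(\mu)\right)$ and $\rank\begin{bmatrix}\bar{\beta}_{i}(\mu)&1\\1&\tilde{\beta}_{i}(\mu)\end{bmatrix}=1+\rank\!\left(\tilde{\beta}_{i}(\mu)-\bar{\beta}_{i}^{-1}(\mu)\right)$; hence~\eqref{X_inverse} forces $Y_{i}(\mu)=X_{i}^{-1}(\mu)$ and~\eqref{beta_inverse} forces $\bar{\beta}_{i}(\mu)\tilde{\beta}_{i}(\mu)=1$, for all $\mu\in\mathcal{M}_{S}$ and $i\in\mathcal{N}$.

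Next I would reduce~\eqref{eq:lmi} to~\eqref{eq:are}. Because $\mathcal{G}_{i22}(\mu)<0$ and $\mathcal{G}_{i33}(\mu)<0$ (each is block-diagonal with negative-definite blocks), a Schur complement turns~\eqref{eq:lmi} into $\mathcal{G}_{i11}(\mu)-\mathcal{G}_{i12}(\mu)\mathcal{G}_{i22}^{-1}(\mu)\mathcal{G}_{i12}^{T}(\mu)-\mathcal{G}_{i13}(\mu)\mathcal{G}_{i33}^{-1}(\mu)\mathcal{G}_{i13}^{T}(\mu)<0$. I would then substitute the stated block structures, use $Y_{i}(\nu)=X_{i}^{-1}(\nu)$, and pre- and post-multiply by $X_{i}(\mu)=Y_{i}^{-1}(\mu)$; the terms collapse termwise onto~\eqref{eq:are} under the reciprocal reparametrization $\tau_{i}:=\tilde{\tau}_{i}^{-1}$, $\theta_{i}:=\tilde{\theta}_{i}^{-1}$, $\tau^{u}_{i}:=(\tilde{\tau}^{u}_{i})^{-1}$ and $\beta^{u}_{i}(\mu):=\tilde{\tau}^{u}_{i}\tilde{\beta}_{i}(\mu)$ (all in $\mathbb{R}^{+}$). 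Concretely, the $\mathcal{G}_{i12}$ term contributes $\sum_{\nu\ne\mu}q_{\mu\nu}X_{i}(\nu)$, which together with the $q_{\mu\mu}X_{i}(\mu)$ from $\mathcal{G}_{i11}$ gives $\sum_{\nu=1}^{M}q_{\mu\nu}X_{i}(\nu)$; the $\tilde{R}_{i}^{-1}(\mu)$ block of $\mathcal{G}_{i33}$ gives $\tilde{R}_{i}(\mu)$; the $\bar{\beta}_{i}(\mu)I$ block gives $\tilde{\beta}_{i}(\mu)I=\tau^{u}_{i}\beta^{u}_{i}(\mu)I$; the block $\tilde{\tau}_{i}I$ and the $N-1$ blocks $\tilde{\theta}_{j}I$, $j\ne i$, paired with the copies of $\tilde{H}_{i}^{T}(\mu)$ in $\mathcal{G}_{i13}$ give $(\tau_{i}+\bar{\theta}_{i})\tilde{H}_{i}^{T}(\mu)\tilde{H}_{i}(\mu)$; and the remaining terms of $\mathcal{G}_{i11}$ give $\tilde{A}_{i}^{T}(\mu)X_{i}(\mu)+X_{i}(\mu)\tilde{A}_{i}(\mu)+X_{i}(\mu)\big(\bar{B}_{2i}(\mu)\bar{B}_{2i}^{T}(\mu)-\tilde{B}_{i}(\mu)\tilde{G}_{i}^{-1}(\mu)\tilde{B}_{i}^{T}(\mu)\big)X_{i}(\mu)$. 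Thus~\eqref{eq:are} holds, and Theorem~\ref{thm1} shows that the global mode dependent controllers $\tilde{K}_{i}(\mu)=-\tilde{G}_{i}^{-1}(\mu)\tilde{B}_{i}^{T}(\mu)X_{i}(\mu)$ robustly stabilize~\eqref{sys:large} with the constraints~\eqref{new:un:local},~\eqref{new:un:ic},~\eqref{un:input}.

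To finish, I would extract~\eqref{eq:thm2} from~\eqref{eq:theorem2lmi}. Since $-\tilde{\tau}^{u}_{i}I<0$, a Schur complement converts~\eqref{eq:theorem2lmi} into $\Upsilon_{i}(\mu)\Upsilon_{i}^{T}(\mu)\le\tilde{\tau}^{u}_{i}\tilde{\beta}_{i}(\mu)I$, i.e., $\norm{\Upsilon_{i}(\mu)}^{2}\le\tilde{\tau}^{u}_{i}\tilde{\beta}_{i}(\mu)=\beta^{u}_{i}(\mu)$; and since $\Upsilon_{i}(\mu)=K_{i}(\phi_{i}(\mu))+\tilde{G}_{i}^{-1}(\mu)\tilde{B}_{i}^{T}(\mu)X_{i}(\mu)=K_{i}(\phi_{i}(\mu))-\tilde{K}_{i}(\mu)$, this is precisely~\eqref{eq:thm2} with $\sigma_{i}=\phi_{i}(\mu)$. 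Theorem~\ref{thm2} then gives that the neighboring mode dependent controllers~\eqref{ctr} with these gains $K_{i}(\sigma_{i})$ stabilize~\eqref{sys} with~\eqref{un:local},~\eqref{un:ic}, which is the claim.

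The main obstacle I anticipate is purely the bookkeeping in the second step: matching each block of $\mathcal{G}_{i13}(\mu)$ and $\mathcal{G}_{i33}(\mu)$ to the correct term of~\eqref{eq:are} after the congruence transformation, and verifying that the reciprocal reparametrization of $\tau_{i},\theta_{i},\tau^{u}_{i}$ together with $\beta^{u}_{i}(\mu)=\tilde{\tau}^{u}_{i}\tilde{\beta}_{i}(\mu)$ is simultaneously consistent with~\eqref{eq:lmi} (where it must reproduce the term $\tau^{u}_{i}\beta^{u}_{i}(\mu)I$) and with~\eqref{eq:theorem2lmi} (where it must reproduce the bound $\beta^{u}_{i}(\mu)$ appearing in~\eqref{eq:thm2}). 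Everything else reduces to two Schur complements and invocations of Theorems~\ref{thm1} and~\ref{thm2}.
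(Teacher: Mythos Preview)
Your proposal is correct and follows essentially the same route as the paper's own proof: deduce $Y_{i}(\mu)=X_{i}^{-1}(\mu)$ and $\tilde{\beta}_{i}(\mu)=\bar{\beta}_{i}^{-1}(\mu)$ from the rank constraints, set $\tau_{i}=\tilde{\tau}_{i}^{-1}$, $\theta_{i}=\tilde{\theta}_{i}^{-1}$, $\tau^{u}_{i}=(\tilde{\tau}^{u}_{i})^{-1}$, $\beta^{u}_{i}(\mu)=\tilde{\tau}^{u}_{i}\tilde{\beta}_{i}(\mu)$, use a Schur complement (plus the congruence by $X_{i}(\mu)$) to recover~\eqref{eq:are} from~\eqref{eq:lmi}, and a second Schur complement to recover~\eqref{eq:thm2} from~\eqref{eq:theorem2lmi}, then invoke Theorems~\ref{thm1} and~\ref{thm2}. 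The paper's proof is terser but identical in substance; your explicit termwise matching is exactly the ``bookkeeping'' that the paper leaves implicit.
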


\begin{proof}
From $X_{i}(\mu)\in\mathbb{S}^{+}$, $Y_{i}(\mu)\in\mathbb{S}^{+}$ and~\eqref{X_inverse},
we have $Y_{i}(\mu)=(X_{i}(\mu))^{-1}$.
Similarly, $\tilde{\beta}_{i}(\mu)=(\bar{\beta}_{i}(\mu))^{-1}$.
On the other hand, if~\eqref{eq:lmi} is satisfied, by setting  $\tau^{u}_{i}=(\tilde{\tau}^{u}_{i})^{-1}$, $\tau_{i}=(\tilde{\tau}_{i})^{-1}$,   $\theta_{i}=(\tilde{\theta}_{i})^{-1}$,
$\beta_{i}^{u}(\mu)=(\bar{\beta}_{i}(\mu))^{-1}\tilde{\tau}^{u}_{i}=\tilde{\beta}_{i}(\mu)\tilde{\tau}^{u}_{i}$, and applying the Schur complement equivalence, the inequality~\eqref{eq:are} is satisfied.
Then, by Theorem~\ref{thm1}, the global mode dependent controllers~\eqref{ctr:large:design}
can be designed to stabilize the large-scale system~\eqref{sys:large}
with the uncertainty constraints~\eqref{new:un:local}, \eqref{new:un:ic}, \eqref{un:input}.

Also, the LMI~\eqref{eq:theorem2lmi} and the
equation~\eqref{ctr:large:design} imply that
$\norm{\tilde{K}_{i}(\mu)-K_{i}(\sigma_{i})}^{2}\le\beta_{i}^{u}(\mu)$ for
all $\mu\in\mathcal{M}_{S}$, $\sigma_{i}=\phi_{i}(\mu)$, $i\in\mathcal{N}$.
That is, the inequality~\eqref{eq:thm2} holds.  Then, by Theorem~\ref{thm2},
the constructed controllers~\eqref{ctr}
stabilize the large-scale system~\eqref{sys} with
the uncertainty constraints~\eqref{un:local},~\eqref{un:ic}.
\end{proof}

\begin{rem}
  In~\cite{XUP09:tac}, a control gain form has been proposed for the design of
  local mode dependent controllers. That is, each local mode dependent control
  gain is chosen to be a weighted average of the related global mode dependent
  control gains.  This particular gain form is then incorporated into the
  coupled LMIs from which the local mode dependent control gains are computed;
  see Theorem~3 and Theorem~4 in~\cite{XUP09:tac} for details.  Unfortunately,
  choosing such a gain form is not helpful in terms of an improvement in
  system performance, and sometimes may even result in infeasibility of the
  corresponding LMIs.  A demonstration of this fact is given in
  Section~\ref{sec:ie}.  Indeed, such a gain form imposes an additional
  constraint and hence is not used in this paper.
\end{rem}

\section{Numerical Example} \label{sec:ie}
Consider the Markovian jump large-scale system given in~\cite{XUP09:tac}.  The
 mode information is
$\mathcal{M}_{V}=\set{[1,1,1]^{T},[1,2,2]^{T},[2,1,2]^{T},[2,2,1]^{T}}$.
The initial distribution of $\eta(t)$ is assumed to be the same as its
stationary distribution $\pi_{\infty}=[\pi_{\infty 1},\ldots,\pi_{\infty
  M}]^{T}$, which can be computed from the infinitesimal generator matrix
$\mathbf{Q}$.  Given a neighboring mode information pattern $\mathcal{C}$, our
objective is to find the corresponding neighboring mode dependent stabilizing
controllers for this large-scale system. An upper bound on the quadratic
cost~\eqref{performance} is also evaluated for the resulting closed-loop
large-scale system. The main software we use is the LMIRank
toolbox~\cite{lmirank}. The procedure is summarized as follows:
\begin{enumerate}
\item Solve the optimization problem
  \begin{align*}
 &\min\gamma \quad \text{subject to} \\
 &\sum\limits_{i=1}^{N}x_{i0}^{T}\left[\sum\limits_{\mu=1}^{M}
  \pi_{\infty\mu}X_{i}(\mu)+\tau_{i}\bar{S}_{i}+\theta_{i}\tilde{S}_{i}
\right]x_{i0}<\gamma,\\
&\text{and~\eqref{eq:lmi},~\eqref{eq:theorem2lmi},~\eqref{beta_inverse},~\eqref{X_inverse}}.
 \end{align*}
  If an optimal value $\gamma$ is found, feasible neighboring mode dependent
control gains~\eqref{ctr} are obtained.
\item Apply the obtained controllers to the large-scale system~\eqref{sys} and
  compute the cost upper bound for the resulting closed-loop large-scale
  system.  The method for computing this upper bound is taken
  from~\cite{UP05:ijc}. It involves solving a worst-case performance analysis
  problem.
\end{enumerate}

Five cases are considered, i.e.,
\begin{align*}
\mathcal{C}_{1} &= \begin{bmatrix} 1 & 0 & 0 \\ 0 & 1 & 0 \\ 0 & 0 & 1 \\ \end{bmatrix},
\mathcal{C}_{2} = \begin{bmatrix} 1 & 1 & 0 \\ 0 & 1 & 0 \\ 0 & 0 & 1 \\ \end{bmatrix},
\mathcal{C}_{3} = \begin{bmatrix} 1 & 1 & 0 \\ 0 & 1 & 1 \\ 0 & 0 & 1 \\ \end{bmatrix},\\
\mathcal{C}_{4} &= \begin{bmatrix} 1 & 1 & 0 \\ 0 & 1 & 1 \\ 0 & 1 & 1 \\ \end{bmatrix},
\mathcal{C}_{5} = \begin{bmatrix} 1 & 1 & 1 \\ 1 & 1 & 1 \\ 1 & 1 & 1 \\ \end{bmatrix}.
\end{align*}
It can be seen that each neighboring mode information pattern contains more
mode information than the preceding one. $\mathcal{C}_{1}$ corresponds to the
local mode dependent control case, while $\mathcal{C}_{5}$ corresponds to the
global mode dependent control case.  By using the preceding procedure,
neighboring mode dependent stabilizing controllers are found for each of these
cases.  Furthermore, if we apply the obtained controllers to the large-scale
system, the cost upper bounds for the resulting closed-loop large-scale
systems are shown in~Fig.~\ref{fig2}.

\begin{figure}[htbp]
\begin{center}
\includegraphics[width=7cm]{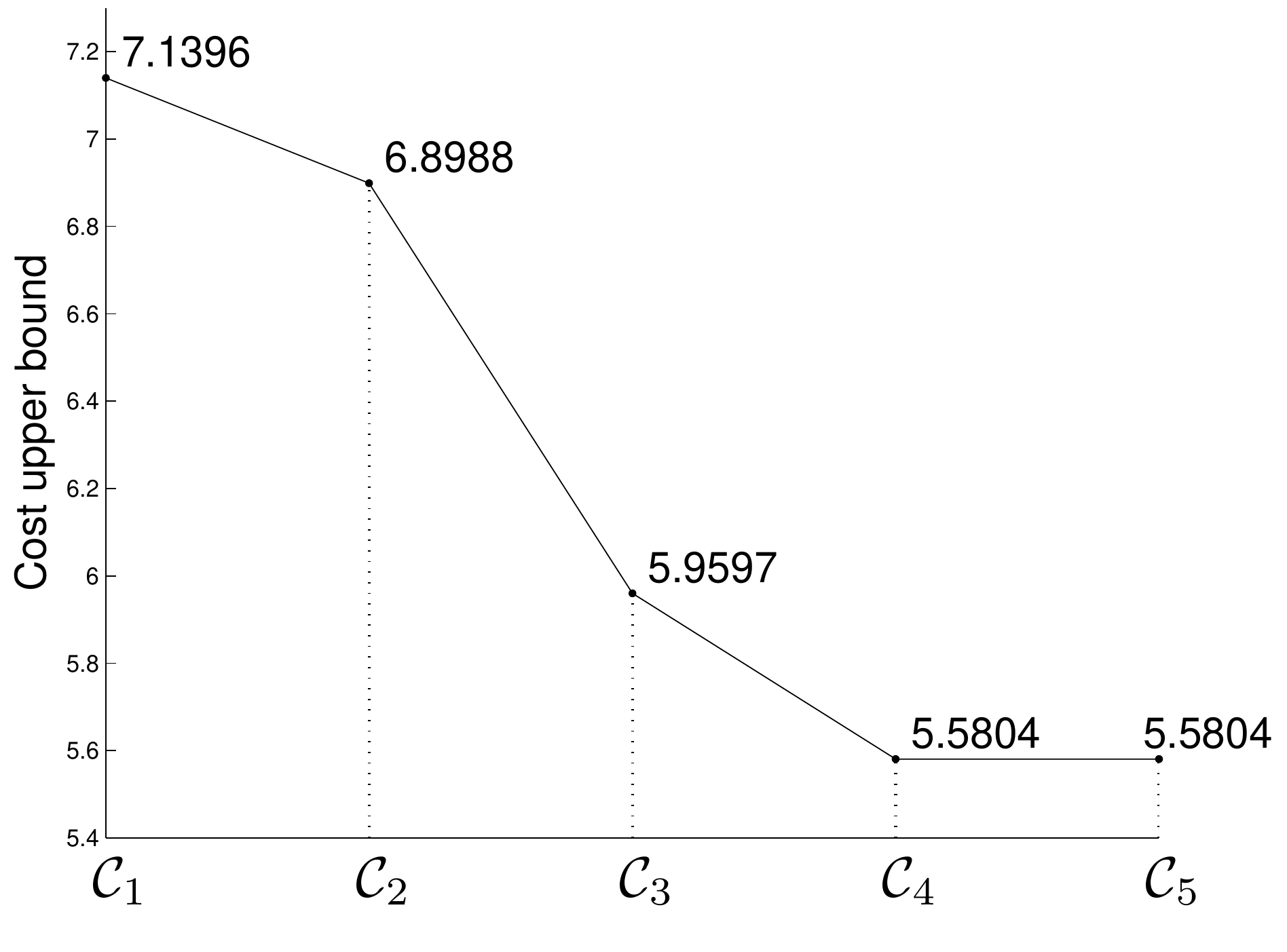}
\caption{Cost upper bounds for the closed-loop systems.}
\label{fig2}
\end{center}
\end{figure}

Note that the cost upper bound found here in the local mode dependent control
case is different from (in fact, less than) that in~\cite{XUP09:tac}.  This is
because the gain form proposed by Theorem~3 in~\cite{XUP09:tac} is not used in
our computation.  One may also notice that the cost upper bound found in the
case of $\mathcal{C}_{4}$ is the same as the one in the case of
$\mathcal{C}_{5}$.  We now explain why this happens. In the case of
$\mathcal{C}_{4}$, each local controller obtains two subsystem modes directly. In fact,  the third subsystem mode can be 
derived from these two modes based on possible mode combinations in $\mathcal{M}_{V}$. Hence
$\mathcal{C}_{4}$ and $\mathcal{C}_{5}$ are equivalent, in the sense that they yield the same performance.
This example demonstrates that the system achieves better (or at
least equal) performance if more
information about the subsystem modes is available to the local
controllers. It also shows that sometimes complete information about the
global mode of the large-scale system may be redundant.

\section{Conclusions}
\label{sec:conclusions}
This paper has presented a decentralized control scheme for uncertain
Markovian jump large-scale systems. The proposed controllers use local subsystem states
and neighboring mode information to generate local control inputs. A computational algorithm
involving rank constrained LMIs has been developed for the design of such controllers.
The developed theory is illustrated by a numerical example.


\end{document}